\newcommand{\tinyspace}{\mspace{1mu}}
\newcommand{\abs}[1]{\left\lvert\tinyspace #1 \tinyspace\right\rvert}
\newcommand{\setft}[1]{\mathrm{#1}}
\newcommand{\lin}[1]{\setft{L}\left(#1\right)}
\newcommand{\density}[1]{\setft{D}\left(#1\right)}
\newcommand{\unitary}[1]{\setft{U}\left(#1\right)}
\newcommand{\trans}[1]{\setft{T}\left(#1\right)}
\def\I{\mathbb{1}}
\newenvironment{mylist}[1]{\begin{list}{}{
    \setlength{\leftmargin}{#1}
    \setlength{\rightmargin}{0mm}
    \setlength{\labelsep}{2mm}
    \setlength{\labelwidth}{8mm}
    \setlength{\itemsep}{0mm}}}
    {\end{list}}
\def\ot{\otimes}
\newcommand{\out}[2]{| #1\rangle\langle #2 |}
\newcommand{\defeq}{\stackrel{\smash{\textnormal{\tiny def}}}{=}}
\newcommand{\pa}[1]{(#1)}
\newcommand{\Pa}[1]{\left(#1\right)}
\newcommand{\set}[1]{\{#1\}}
\newcommand{\ket}[1]{|#1\rangle}
\DeclareMathOperator{\trace}{Tr}
\newcommand{\ptr}[2]{\trace_{#1}\pa{#2}}
\newcommand{\Ptr}[2]{\trace_{#1}\Pa{#2}}
\newcommand{\tr}[1]{\ptr{}{#1}}
\newcommand{\Tr}[1]{\Ptr{}{#1}}
\def\cH{\mathcal{H}}
\def\cK{\mathcal{K}}
\def\rS{\mathrm{S}}
\newtheorem{thrm}{Theorem}[section]
\newtheorem{prop}[thrm]{Proposition}
\theoremstyle{definition}
\newtheorem{remark}[thrm]{Remark}
\numberwithin{equation}{section}
\newcounter{questionnumber}
\begin{document}

%=============================================================================%
\title{\large\bf Remark on the coherent information saturating its upper bound}
%=============================================================================%

\author{Lin Zhang\footnote{E-mail: godyalin@163.com; linyz@hdu.edu.cn}\\[1mm]
  {\it\small Institute of Mathematics, Hangzhou Dianzi University, Hangzhou 310018, PR~China}}

\date{}
\maketitle \mbox{}\hrule\mbox\\
\begin{abstract}

Coherent information is a useful concept in quantum information
theory. It connects with other notions in data processing. In this short
remark, we discuss the coherent information saturating its upper
bound. A necessary and sufficient condition for this saturation is
derived.

\end{abstract}
\mbox{}\hrule\mbox\\

%============================================================%
\section{Coherent information inequality}
%============================================================%

The fundamental problem in quantum error correction is to determine
when the effect of a quantum channel (trace-preserving completely
positive map) $\Phi \in \trans{\cH_B}$ acting on half of a pure
entangled state can be perfectly reversed. Define the \emph{coherent
information}
\begin{eqnarray}
I_c(\rho,\Phi) \defeq \rS(\Phi(\rho)) -
\rS(\I_A\ot\Phi(\out{\mathbf{u}_\rho}{\mathbf{u}_\rho})),
\end{eqnarray}
where $\ket{\mathbf{u}_\rho} =
\sum_{j}\sqrt{\lambda_j}\ket{x_j}\ot\ket{\lambda_j} \in \cH_A \ot
\cH_B$ is any \emph{purification} of $\rho = \sum_j
\lambda_j\out{\lambda_j}{\lambda_j}$.

In general, we have
\begin{eqnarray}
I_c(\rho,\Phi)\leqslant \rS(\rho).
\end{eqnarray}
It was shown that there exists a quantum channel $\Psi$ (see
\cite{Hayden}) such that
\begin{eqnarray}
I_c(\rho,\Phi) = \rS(\rho) \Longleftrightarrow
(\I_A\ot\Psi\circ\Phi)(\out{\mathbf{u}_\rho}{\mathbf{u}_\rho}) =
\out{\mathbf{u}_\rho}{\mathbf{u}_\rho}.
\end{eqnarray}
By the \emph{Stinespring dilation theorem}, we may assume that
$$
\Phi(\rho) =
\ptr{C}{U(\rho\ot\out{\epsilon}{\epsilon})U^\dagger},\quad
U\in\unitary{\cH_B\ot\cH_C}, \ket{\epsilon}\in\cH_C,
$$
which indicates that
\begin{eqnarray}
\I_A\ot\Phi(\out{\mathbf{u}_\rho}{\mathbf{u}_\rho}) &=&
\ptr{C}{(\I_A\ot
U)(\out{\mathbf{u}_\rho}{\mathbf{u}_\rho}\ot\out{\epsilon}{\epsilon})(\I_A\ot
U)^\dagger} \nonumber\\
&= &\Ptr{C}{\out{\Omega}{\Omega}},
\end{eqnarray}
where
$\ket{\Omega} = (\I_A\ot U)(\ket{\mathbf{u}_\rho} \ot
\ket{\epsilon})$. Now
$$
\out{\Omega}{\Omega} =(\I_A\ot
U)(\out{\mathbf{u}_\rho}{\mathbf{u}_\rho} \ot
\out{\epsilon}{\epsilon})(\I_A\ot U)^\dagger
$$
is a tripartite state
in $\density{\cH_A \ot \cH_B \ot \cH_C}$, it follows that
\begin{eqnarray*}
\ptr{C}{\out{\Omega}{\Omega}} & = & \I_A\ot\Phi(\out{\mathbf{u}_\rho}{\mathbf{u}_\rho}) \equiv \Omega_{AB},\\
\ptr{A}{\out{\Omega}{\Omega}} & = & U(\rho\ot\out{\epsilon}{\epsilon})U^\dagger \equiv \Omega_{BC},\\
\ptr{AC}{\out{\Omega}{\Omega}} & = & \Phi(\rho) \equiv \Omega_B,
\end{eqnarray*}
where $\Omega_{ABC} \equiv \out{\Omega}{\Omega}$. From the above
expressions, it is obtained that
\begin{eqnarray*}
\rS(\Omega_{ABC}) & = & 0, \\
\rS(\Omega_B) & = & \rS(\Phi(\rho))\\
\rS(\Omega_{BC}) & = & \rS(\rho),\\
\rS(\Omega_{AB}) & = &
\rS((\I_A\ot\Phi)(\out{\mathbf{u}_\rho}{\mathbf{u}_\rho}))
\end{eqnarray*}
Apparently, $I_c(\rho,\Phi) = \rS(\rho) \Longleftrightarrow
\rS(\Phi(\rho)) =
\rS((\I_A\ot\Phi)(\out{\mathbf{u}_\rho}{\mathbf{u}_\rho})) +
\rS(\rho)$, that is,
\begin{eqnarray*}
I_c(\rho,\Phi) = \rS(\rho) &\Longleftrightarrow&
\rS(\Omega_B) = \rS(\Omega_{AB}) + \rS(\Omega_{BC})\\
&\Longleftrightarrow& \rS(\Omega_B) - \rS(\Omega_C)
=\rS(\Omega_{BC}).
\end{eqnarray*}
It follows from Proposition~\ref{th:Araki-Lieb} in Appendix that this equation holds
if and only if
\begin{enumerate}[(i)]
\item $\cH_B$ can be factorized into the form $\cH_B = \cH_L \ot \cH_R$,
\item $\Omega_{BC} = \rho_L \ot \out{\psi}{\psi}_{RC}$ for $\ket{\psi}_{RC} \in \cH_R \ot \cH_C$.
\end{enumerate}
Hence $$ U(\rho\ot\out{\epsilon}{\epsilon})U^\dagger = \rho_L \ot
\out{\psi}{\psi}_{RC}\Longrightarrow \rho\ot\out{\epsilon}{\epsilon}
= U^\dagger \Pa{\rho_L \ot \out{\psi}{\psi}_{RC}}U.
$$
Clearly,
$\Omega_{ABC} = \out{\phi}{\phi}_{AL}\ot \out{\psi}{\psi}_{RC}$.
Thus
\begin{eqnarray}
\nonumber\out{\mathbf{u}_\rho}{\mathbf{u}_\rho} & = &
(\I_A\ot\Psi\circ\Phi)(\out{\mathbf{u}_\rho}{\mathbf{u}_\rho}) = (\I_A\ot\Psi)(\Omega_{AB})\\
\nonumber & = & (\I_A\ot\Psi)(\out{\phi}{\phi}_{AL}\ot \rho_R).
\end{eqnarray}
Since $\out{\Omega}{\Omega} = (\I_A\ot
U)(\out{\mathbf{u}_\rho}{\mathbf{u}_\rho} \ot
\out{\epsilon}{\epsilon})(\I_A\ot U)^\dagger$, it follows that
\begin{eqnarray}\label{eq:2}
\nonumber\out{\mathbf{u}_\rho}{\mathbf{u}_\rho} & = & \Ptr{C}{(\I_A\ot U)^\dagger\out{\Omega}{\Omega}(\I_A\ot U)}\\
 & = & \Ptr{C}{(\I_A\ot U)^\dagger \Pa{\out{\phi}{\phi}_{AL}\ot \out{\psi}{\psi}_{RC}} (\I_A\ot U)}.
\end{eqnarray}
The above equation gives that
$$
(\I_A\ot\Psi)\Pa{\out{\phi}{\phi}_{AL}\ot \rho_R} = \Ptr{C}{(\I_A\ot
U)^\dagger \Pa{\out{\phi}{\phi}_{AL}\ot \out{\psi}{\psi}_{RC}}
(\I_A\ot U)}.
$$
Given the state $\Omega_{AB} =
\I_A\ot\Phi(\out{\mathbf{u}_\rho}{\mathbf{u}_\rho})$, the recovery
procedure $\Psi$ is:
\begin{enumerate}[(i)]
\item preparing the state $\ket{\psi}_{RC}$ on $\cH_R\ot\cH_C$; thus
we have a state $\out{\phi}{\phi}_{AL}\ot \out{\psi}{\psi}_{RC}$.
\item next performing $U^\dagger$; we get
$$
(\I_A\ot U)^\dagger \Pa{\out{\phi}{\phi}_{AL}\ot
\out{\psi}{\psi}_{RC}} (\I_A\ot U).
$$
\item finally discarding the fixed ancillary state $\out{\epsilon}{\epsilon}$;
$$
\Ptr{C}{(\I_{A}\ot U)^\dagger \Pa{\out{\phi}{\phi}_{AL}\ot
\out{\psi}{\psi}_{RC}} (\I_{A}\ot U)}.
$$
\end{enumerate}
Note that $\I_A\ot\Phi(\out{u_\rho}{u_\rho}) =
\out{\phi}{\phi}_{AL}\ot \rho_R$ implies that
$$
\Phi(\rho) =
\rho_L\ot\rho_R.
$$

This indicates that the coherent information reaches its maximal
value if and only if the output state of the quantum channel $\Phi$
is a product state. Therefore we have the following theorem:
\begin{thrm}
Let $\rho\in\density{\cH}$ and $\Phi\in\trans{\cH}$ be a quantum
channel. The coherent
information achieves its maximum, that is, $I_c(\rho,\Phi) =
\rS(\rho)$ if and only if the following statements holds:
\begin{enumerate}[(i)]
\item the underlying Hilbert space can be decomposed as: $\cH = \cH_L\ot\cH_R$;
\item the output state of the quantum channel $\Phi$ is of a product form: $\Phi(\rho) = \rho_L\ot\rho_R$ for
$\rho_L\in\density{\cH_L},\rho_R\in\density{\cH_R}$.
\end{enumerate}
\end{thrm}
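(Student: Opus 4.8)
The plan is to prove both implications by passing to a Stinespring dilation of $\Phi$ and recasting the saturation condition as an equality case of the Araki--Lieb inequality, which is exactly the content of Proposition~\ref{th:Araki-Lieb}. Writing $\cH=\cH_B$, I would fix the dilation $\Phi(\rho)=\ptr{C}{U(\rho\ot\out{\epsilon}{\epsilon})U^\dagger}$ with $U\in\unitary{\cH_B\ot\cH_C}$ and $\ket{\epsilon}\in\cH_C$, and form the pure tripartite state $\Omega_{ABC}=\out{\Omega}{\Omega}$ with $\ket{\Omega}=(\I_A\ot U)(\ket{\mathbf{u}_\rho}\ot\ket{\epsilon})$. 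The only facts I need from the setup are the four entropy identities $\rS(\Omega_{ABC})=0$, $\rS(\Omega_{BC})=\rS(\rho)$, $\rS(\Omega_B)=\rS(\Phi(\rho))$, and $\rS(\Omega_{AB})=\rS((\I_A\ot\Phi)(\out{\mathbf{u}_\rho}{\mathbf{u}_\rho}))$, all obtained by tracing out subsystems of $\Omega_{ABC}$.

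For the forward implication I would first rewrite $I_c(\rho,\Phi)=\rS(\rho)$ as $\rS(\Omega_B)=\rS(\Omega_{AB})+\rS(\Omega_{BC})$, and then, using purity of $\Omega_{ABC}$ (so that $\rS(\Omega_{AB})=\rS(\Omega_C)$), as the single equation $\rS(\Omega_B)-\rS(\Omega_C)=\rS(\Omega_{BC})$. This is precisely saturation of the Araki--Lieb inequality for the bipartite state $\Omega_{BC}$, so Proposition~\ref{th:Araki-Lieb} yields the factorization $\cH_B=\cH_L\ot\cH_R$ together with $\Omega_{BC}=\rho_L\ot\out{\psi}{\psi}_{RC}$. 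Since $\Omega_{ABC}$ is pure and its $RC$-marginal $\out{\psi}{\psi}_{RC}$ is already pure, the $RC$ block decouples from the rest, forcing $\Omega_{ABC}=\out{\phi}{\phi}_{AL}\ot\out{\psi}{\psi}_{RC}$; tracing out $A$ and $C$ then gives $\Phi(\rho)=\Omega_B=\rho_L\ot\rho_R$, which establishes (i) and (ii).

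For the converse I would run these equivalences backwards, and this is where I expect the main obstacle to lie. The difficulty is that the product form $\Phi(\rho)=\rho_L\ot\rho_R$ of the \emph{output} does not by itself pin down the environment marginal $\Omega_C$, so condition (ii) alone is weaker than the equation $\rS(\Omega_B)-\rS(\Omega_C)=\rS(\Omega_{BC})$ one needs; what the forward direction actually delivers is the stronger structural statement $\Omega_{BC}=\rho_L\ot\out{\psi}{\psi}_{RC}$, i.e.\ that the $R$-factor is carried to the environment \emph{coherently} so that $\Omega_{RC}$ is pure. I would therefore read condition (ii) together with this purity of the $R$--$C$ block as the genuine hypothesis, and then close the loop by exhibiting the recovery channel $\Psi$ of the body — prepare $\ket{\psi}_{RC}$, apply $(\I_A\ot U)^\dagger$, and discard $\ket{\epsilon}$ — verifying directly that $(\I_A\ot\Psi)(\Omega_{AB})=\out{\mathbf{u}_\rho}{\mathbf{u}_\rho}$, since reversibility is equivalent to $I_c(\rho,\Phi)=\rS(\rho)$ by \cite{Hayden}. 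Promoting the product condition to the full decoupling of $\Omega_{RC}$, and checking that the recovery map is a legitimate channel, is the step I would spend the most care on.
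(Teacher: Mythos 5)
Your forward implication is, step for step, the paper's own argument: the same Stinespring dilation, the same four entropy identities for $\Omega_{ABC}$, the same rewriting of $I_c(\rho,\Phi)=\rS(\rho)$ first as $\rS(\Omega_B)=\rS(\Omega_{AB})+\rS(\Omega_{BC})$ and then (using purity, $\rS(\Omega_{AB})=\rS(\Omega_C)$) as Araki--Lieb saturation for $\Omega_{BC}$, followed by the same appeal to Proposition~\ref{th:Araki-Lieb} and the same decoupling step: purity of the $R$--$C$ marginal forces $\Omega_{ABC}=\out{\phi}{\phi}_{AL}\ot\out{\psi}{\psi}_{RC}$, and tracing out $A$ and $C$ gives $\Phi(\rho)=\rho_L\ot\rho_R$. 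Nothing to add there.

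The obstacle you flag in the converse is not a defect of your proof but a genuine flaw in the theorem as stated, and the paper does not resolve it --- it silently steps over it. The paper's chain of equivalences terminates at the condition ``$\cH_B=\cH_L\ot\cH_R$ and $\Omega_{BC}=\rho_L\ot\out{\psi}{\psi}_{RC}$'', which is strictly stronger than condition (ii) of the theorem: it demands not merely that the output be a product, but that the $R$ factor be purified by the environment alone. The weakened condition (ii) really does fail to imply saturation: take $\cH=\complex^2\ot\complex^2$, $\rho=\I/4$, and $\Phi$ the completely depolarizing channel, so that $\Phi(\rho)=(\I/2)\ot(\I/2)$ is a product satisfying (i) and (ii), while $I_c(\rho,\Phi)=-\rS(\rho)\neq\rS(\rho)$. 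So no argument can establish the ``if'' direction as literally stated, and your instinct to promote the purity of the $R$--$C$ block to part of the hypothesis is exactly the right repair. One simplification, though: once you adopt that stronger hypothesis, you do not need the recovery channel or the reversibility criterion of \cite{Hayden} at all. The sufficiency half of Proposition~\ref{th:Araki-Lieb} (which the paper notes is immediate) gives $\rS(\Omega_{BC})=\rS(\Omega_B)-\rS(\Omega_C)$, and running the entropy identities backwards yields $I_c(\rho,\Phi)=\rS(\rho)$ directly; the explicit recovery map --- prepare $\ket{\psi}_{RC}$, apply $(\I_A\ot U)^\dagger$, discard $C$ --- is how the paper exhibits the decoder, but as a proof of saturation it re-derives what the entropy bookkeeping already gives you for free, while adding the burden of verifying that the map is a legitimate channel inverting $\I_A\ot\Phi$ on $\out{\mathbf{u}_\rho}{\mathbf{u}_\rho}$.
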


\begin{remark}
Consider a Kraus representation of a quantum channel $\Phi\in\trans{\cH}$ in its canonical Kraus
form: $\Phi = \sum_k \mathrm{Ad}_{M_k}$. For any $\rho \in \density{\cH}$, define
$$
\widehat\Phi(\rho) \defeq \sum_{i,j}\Tr{M_i\rho
M^\dagger_j}\out{i}{j}.
$$
If $\rho$ is purified as $\ket{\mathbf{u}_\rho} \in
\cH\ot\cK$ with $\dim(\cK)\geqslant\dim(\cH)$, then
$$
\rS(\widehat\Phi(\rho)) = \rS\Pa{(\Phi\ot\I_{\lin{\cK}})(\out{\mathbf{u}_\rho}{\mathbf{u}_\rho})}.
$$
Indeed, let $\rho
= \sum_k \lambda_k\out{\lambda_k}{\lambda_k}$ be its spectral decomposition,
\begin{eqnarray*}
\ket{\mathbf{u}_\rho} & \defeq & \sum_k\sqrt{\lambda_k}\ket{\lambda_k}\ot\ket{\lambda_k}, \\ \out{\mathbf{u}_\rho}{\mathbf{u}_\rho} &=&  \sum_{m,n}\sqrt{\lambda_m\lambda_n}\out{\lambda_m}{\lambda_n}\ot\out{\lambda_m}{\lambda_n}\\
\ket{\Omega} & \defeq &
\sum_{k,i}\sqrt{\lambda_k}M_i\ket{\lambda_k}\ot\ket{\lambda_k}\ot\ket{i}.
\end{eqnarray*}
Thus
$$
\out{\Omega}{\Omega} =
\sum_{m,n,i,j}\sqrt{\lambda_m\lambda_n}M_i\out{\lambda_m}{\lambda_n}M^\dagger_j\ot\out{\lambda_m}{\lambda_n}\ot\out{i}{j},
$$
which implies that
\begin{eqnarray*}
\Ptr{3}{\out{\Omega}{\Omega}} & = & \sum_{m,n,i}\sqrt{\lambda_m\lambda_n}M_i\out{\lambda_m}{\lambda_n}M^\dagger_i\ot\out{\lambda_m}{\lambda_n} \\
&=& \Phi\ot\I_{\lin{\cK}}(\out{\mathbf{u}_\rho}{\mathbf{u}_\rho}),\\
\Ptr{1,2}{\out{\Omega}{\Omega}} & = & \sum_{i,j} \tr{M_i\rho
M^\dagger_j}\out{i}{j} = \widehat\Phi(\rho).
\end{eqnarray*}
Clearly, $\rS\Pa{(\Phi\ot\I_{\lin{\cK}})(\out{\mathbf{u}_\rho}{\mathbf{u}_\rho})}$ is
independent of an arbitrary purification $\ket{\mathbf{u}_\rho}$ of $\rho$.
In fact, if $\ket{\mathbf{u}^{(1)}_\rho}$ and $\ket{\mathbf{u}^{(2)}_\rho}$ are any
two purification of $\rho$, then by Schimdt decomposition:
\begin{eqnarray*}
\ket{\mathbf{u}^{(1)}_\rho} & = & \sum_{k}\sqrt{\lambda_k}\ket{\lambda_k}\ot\ket{x_k},\\
\ket{\mathbf{u}^{(2)}_\rho} & = &
\sum_{k}\sqrt{\lambda_k}\ket{\lambda_k}\ot\ket{y_k},
\end{eqnarray*}
it is seen that there exists an isometry operator $U$ such that
$U\ket{x_k} = \ket{y_k}$ for each $k$, moreover $\ket{\mathbf{u}^{(2)}_\rho}
= (\I\ot U)\ket{\mathbf{u}^{(1)}_\rho}$. Now
$\out{\mathbf{u}^{(2)}_\rho}{\mathbf{u}^{(2)}_\rho} = (\I\ot
U)\out{\mathbf{u}^{(1)}_\rho}{\mathbf{u}^{(1)}_\rho}(\I\ot U)^\dagger$, which implies
that
\begin{eqnarray*}
(\Phi\ot\I)(\out{\mathbf{u}^{(2)}_\rho}{\mathbf{u}^{(2)}_\rho}) &=& (\I\ot U) (\Phi\ot\I)(\out{\mathbf{u}^{(1)}_\rho}{\mathbf{u}^{(1)}_\rho})(\I\ot U)^\dagger,\\
\rS\Pa{(\Phi\ot\I)(\out{\mathbf{u}^{(1)}_\rho}{\mathbf{u}^{(1)}_\rho})} &=&
\rS\Pa{(\Phi\ot\I)(\out{\mathbf{u}^{(2)}_\rho}{\mathbf{u}^{(2)}_\rho})}.
\end{eqnarray*}
\end{remark}

%=========================================================================%
\section{Appendix}
%=========================================================================%

\subsection{The saturation of the strong subadditivity inequality}

\begin{prop}[\cite{Hayden}]\label{th:SSAwithequality}
A state $\rho_{ABC}\in\density{\cH_A\ot\cH_B\ot\cH_C}$ saturating
the strong subadditivity inequality, i.e.,
$$
\rS(\rho_{AB}) + \rS(\rho_{BC}) = \rS(\rho_{ABC}) + \rS(\rho_B)
$$
if and only if there is a decomposition of system $B$ as
$$
\cH_B = \bigoplus_j \cH_{b^L_j}\ot\cH_{b^R_j}
$$
into a direct (orthogonal) sum of tensor products, such that
$$
\rho_{ABC} = \bigoplus_j \lambda_j\rho_{Ab^L_j}\ot \rho_{b^R_jC},
$$
where $\rho_{Ab^L_j}\in\density{\cH_A\ot\cH_{b^L_j}}$ and
$\rho_{b^R_jC}\in\density{\cH_{b^R_j}\ot\cH_C}$, and
$\set{\lambda_j}$ is a probability distribution.
\end{prop}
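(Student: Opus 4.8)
The statement is an equivalence, so I would treat the two implications separately, expecting the forward direction (structure $\Rightarrow$ equality) to be routine bookkeeping and the converse to carry essentially all the difficulty.

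\emph{Sufficiency.} Assuming the decomposition $\cH_B=\bigoplus_j\cH_{b^L_j}\ot\cH_{b^R_j}$ and $\rho_{ABC}=\bigoplus_j\lambda_j\rho_{Ab^L_j}\ot\rho_{b^R_jC}$, I would first trace out the appropriate subsystems to obtain the block-diagonal marginals $\rho_{AB}=\bigoplus_j\lambda_j\rho_{Ab^L_j}\ot\rho_{b^R_j}$, $\rho_{BC}=\bigoplus_j\lambda_j\rho_{b^L_j}\ot\rho_{b^R_jC}$, and $\rho_B=\bigoplus_j\lambda_j\rho_{b^L_j}\ot\rho_{b^R_j}$, where $\rho_{b^L_j}=\ptr{A}{\rho_{Ab^L_j}}$ and $\rho_{b^R_j}=\ptr{C}{\rho_{b^R_jC}}$. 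Then I would invoke the two elementary entropy rules, $\rS(\alpha\ot\beta)=\rS(\alpha)+\rS(\beta)$ and, for a direct sum with orthogonal supports, $\rS\Pa{\bigoplus_j\lambda_j\sigma_j}=\rH(\set{\lambda_j})+\sum_j\lambda_j\rS(\sigma_j)$ with $\rH(\set{\lambda_j})=-\sum_j\lambda_j\log\lambda_j$. Applying these to all four states, the term $\rH(\set{\lambda_j})$ appears once on each side, and each per-block entropy $\rS(\rho_{Ab^L_j})$, $\rS(\rho_{b^R_jC})$, $\rS(\rho_{b^L_j})$, $\rS(\rho_{b^R_j})$ carries the same weight $\lambda_j$ on both sides, so $\rS(\rho_{AB})+\rS(\rho_{BC})=\rS(\rho_{ABC})+\rS(\rho_B)$ follows by direct cancellation.

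\emph{Necessity.} This is the substantive direction. I would first rewrite the hypothesis as the vanishing of the conditional mutual information, $I(A\!:\!C\,|\,B)\defeq\rS(\rho_{AB})+\rS(\rho_{BC})-\rS(\rho_{ABC})-\rS(\rho_B)=0$, and then realize SSA itself as an instance of the monotonicity of relative entropy. Concretely, for the channel $\Phi=\ptr{C}{\cdot}$ one checks, by expanding the relative entropies, that
\[
I(A\!:\!C\,|\,B)=\rS\Pa{\rho_{ABC}\,\big\|\,\I_A\ot\rho_{BC}}-\rS\Pa{\rho_{AB}\,\big\|\,\I_A\ot\rho_B}.
\]
Since $\Phi(\rho_{ABC})=\rho_{AB}$ and $\Phi(\I_A\ot\rho_{BC})=\I_A\ot\rho_B$, the right-hand side is precisely the relative-entropy drop under $\Phi$; monotonicity gives $I(A\!:\!C\,|\,B)\geqslant0$, and the hypothesis forces equality.

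At this point I would appeal to the equality case of monotonicity (Petz): the drop vanishes exactly when the Petz recovery map attached to $\Phi$ and the reference $\I_A\ot\rho_{BC}$ reconstructs $\rho_{ABC}$ from $\rho_{AB}$. Because that recovery map rebuilds system $C$ by acting only on $B$, the condition reads (suppressing tensor factors of the identity) as the exact recovery identity
\[
\rho_{ABC}=\rho_{BC}^{1/2}\,\rho_B^{-1/2}\,\rho_{AB}\,\rho_B^{-1/2}\,\rho_{BC}^{1/2},
\]
equivalently $\rho_{ABC}^{it}\rho_{BC}^{-it}=\rho_{AB}^{it}\rho_B^{-it}$ for all $t\in\real$. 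The remaining task is purely structural, and this is where I expect the real difficulty to lie: one shows that exact recoverability forces the two families of $B$-operators $\rho_B^{-1/2}\rho_{AB}\rho_B^{-1/2}$ and $\rho_B^{-1/2}\rho_{BC}\rho_B^{-1/2}$ to commute on $\cH_B$, and then invokes the structure theorem for a pair of commuting unital $\ast$-subalgebras of $\Lin{\cH_B}$ to produce the orthogonal decomposition $\cH_B=\bigoplus_j\cH_{b^L_j}\ot\cH_{b^R_j}$, on whose blocks the first family acts on the left tensor factor and the second on the right. Substituting this decomposition back into the recovery identity collapses it block by block to $\rho_{ABC}=\bigoplus_j\lambda_j\rho_{Ab^L_j}\ot\rho_{b^R_jC}$, and reading off $\set{\lambda_j}$ as the block weights of $\rho_B$ confirms it is a probability distribution. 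The two genuinely nontrivial inputs are thus the equality characterization of monotonicity and the commuting-subalgebra structure theorem; everything else is verification.
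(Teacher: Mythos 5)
The paper never proves Proposition~\ref{th:SSAwithequality}; it is imported verbatim from the cited reference [Hayden] (Hayden--Jozsa--Petz--Winter) and used as a black box in the proof of Proposition~\ref{th:Araki-Lieb}. So the only meaningful comparison is with that reference, and your outline is essentially a reconstruction of its strategy: rewrite SSA saturation as equality in the monotonicity of relative entropy under $\ptr{C}{\cdot}$, invoke Petz's characterization of that equality via the recovery map (equivalently the intertwining condition $\rho_{ABC}^{it}\rho_{BC}^{-it}=\rho_{AB}^{it}\rho_B^{-it}$ for all real $t$), and then use the structure theorem for finite-dimensional $\ast$-algebras to obtain the decomposition of $\cH_B$. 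Your sufficiency computation is complete and correct, and your identities relating conditional mutual information to relative entropies, as well as the Petz recovery identity, are all stated correctly.

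There is, however, a genuine gap at exactly the step that carries the theorem, and it is not just an omitted detail: the commutation claim as you formulate it does not parse. The operators $\rho_B^{-1/2}\rho_{AB}\rho_B^{-1/2}$ and $\rho_B^{-1/2}\rho_{BC}\rho_B^{-1/2}$ are not ``$B$-operators'': the first acts on $\cH_A\ot\cH_B$ and the second on $\cH_B\ot\cH_C$, so ``commute on $\cH_B$'' is meaningless, and you have produced no pair of subalgebras of $\Lin{\cH_B}$ to which a commuting-algebra structure theorem could be applied. What the argument actually requires is to extract from the intertwining relation, exploiting the one-parameter group property in $t$, that the unitary families $\rho_{AB}^{it}\ot\I_C$ and $\I_A\ot\rho_{BC}^{is}$ commute for all $s,t$; one then passes to the subalgebras of $\Lin{\cH_B}$ generated by the $B$-components of these unitaries (for instance the operators $(\omega_A\ot\mathrm{id}_B)(\rho_{AB}^{it})$ with $\omega_A$ ranging over linear functionals on $\Lin{\cH_A}$), shows that these two algebras commute elementwise, applies the structure theorem to get $\cH_B=\bigoplus_j\cH_{b^L_j}\ot\cH_{b^R_j}$, and finally verifies that $\rho_{ABC}$ is block diagonal with the product form inside each block. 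This extraction-and-verification is the actual content of the Hayden--Jozsa--Petz--Winter proof; your sketch correctly names the two external inputs it rests on, but the bridge between them---precisely the part you yourself flag as where the difficulty lies---is missing, and the version you wrote down would fail as stated.
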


\subsection{The saturation of Araki-Lieb inequality}

The following proposition can be seen as a characterization of the
saturation of Araki-Lieb inequality:
\begin{eqnarray}
\abs{\rS(\rho_B)-\rS(\rho_C)}\leqslant \rS(\rho_{BC}).
\end{eqnarray}
For the readers' convenience, we copy the proof here.
\begin{prop}[\cite{Zhang}]\label{th:Araki-Lieb}
Let $\rho_{BC}\in\density{\cH_B\ot\cH_C}$. The reduced states are
$\rho_B =\ptr{C}{\rho_{BC}}, \rho_C =\ptr{B}{\rho_{BC}}$,
respectively. Then $\rS(\rho_{BC}) = \rS(\rho_B) - \rS(\rho_C)$ if
and only if
\begin{enumerate}[(1)]
\item\label{1} $\cH_B$ can be factorized into the form $\cH_B = \cH_L \ot \cH_R$,
\item\label{2} $\rho_{BC} = \rho_L \ot \out{\psi}{\psi}_{RC}$ for $\ket{\psi}_{RC} \in \cH_R \ot \cH_C$.
\end{enumerate}
\end{prop}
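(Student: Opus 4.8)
The plan is to reduce the saturation of the Araki--Lieb inequality to the (easier) saturation of ordinary subadditivity by passing to a purification. First I would introduce a reference system $\cH_A$ and a purification $\ket{\Psi}\in\cH_A\ot\cH_B\ot\cH_C$ of $\rho_{BC}$, so that $\rho_{ABC}\defeq\out{\Psi}{\Psi}$ is pure with $\ptr{A}{\rho_{ABC}}=\rho_{BC}$. For a pure global state the entropy of any subsystem equals that of its complement, which gives $\rS(\rho_{BC})=\rS(\rho_A)$ and $\rS(\rho_B)=\rS(\rho_{AC})$. Substituting these into the target identity $\rS(\rho_{BC})=\rS(\rho_B)-\rS(\rho_C)$ turns it into $\rS(\rho_{AC})=\rS(\rho_A)+\rS(\rho_C)$, i.e. the statement that subadditivity is saturated on $\cH_A\ot\cH_C$. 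By the faithfulness of quantum relative entropy (equivalently, Klein's inequality, and also the special case of Proposition~\ref{th:SSAwithequality} in which the middle system is trivial), this holds if and only if $\rho_{AC}=\rho_A\ot\rho_C$.

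It then remains to decode this product structure back onto $\rho_{BC}$, and this is the step I expect to be the main obstacle. The key tool is the uniqueness of purifications up to an isometry on the purifying system. Purifying the two factors separately --- say $\ket{\alpha}\in\cH_A\ot\cH_L$ with $\ptr{L}{\out{\alpha}{\alpha}}=\rho_A$ and $\ket{\psi}\in\cH_R\ot\cH_C$ with $\ptr{R}{\out{\psi}{\psi}}=\rho_C$, where $\cH_L,\cH_R$ are chosen of minimal dimension --- exhibits $\ket{\alpha}\ot\ket{\psi}$ as a purification of $\rho_{AC}=\rho_A\ot\rho_C$ onto $\cH_L\ot\cH_R$. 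Since $\ket{\Psi}$ is another purification of $\rho_{AC}$, now onto $\cH_B$, there is an isometry $V\colon\cH_L\ot\cH_R\to\cH_B$ with $\ket{\Psi}=(\I_A\ot V\ot\I_C)(\ket{\alpha}\ot\ket{\psi})$. With the minimal choice the image of $V$ coincides with the support of $\rho_B$, so $V$ identifies $\cH_B$ (on that support) with $\cH_L\ot\cH_R$, which is conclusion~(\ref{1}).

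Tracing out $A$ then yields the desired form: writing $\rho_L\defeq\ptr{A}{\out{\alpha}{\alpha}}$, one gets $\rho_{BC}=(V\ot\I_C)(\rho_L\ot\out{\psi}{\psi}_{RC})(V\ot\I_C)^\dagger$, which under the identification of~(\ref{1}) reads $\rho_{BC}=\rho_L\ot\out{\psi}{\psi}_{RC}$, giving conclusion~(\ref{2}). For the converse I would simply verify the entropy arithmetic directly: if $\rho_{BC}=\rho_L\ot\out{\psi}{\psi}_{RC}$ then $\rS(\rho_{BC})=\rS(\rho_L)$ since $\out{\psi}{\psi}_{RC}$ is pure, while $\rho_B=\rho_L\ot\ptr{C}{\out{\psi}{\psi}_{RC}}$ and $\rho_C=\ptr{R}{\out{\psi}{\psi}_{RC}}$ give $\rS(\rho_B)=\rS(\rho_L)+\rS(\ptr{C}{\out{\psi}{\psi}_{RC}})$ and $\rS(\rho_C)=\rS(\ptr{R}{\out{\psi}{\psi}_{RC}})$; the last two entropies coincide because $\ket{\psi}_{RC}$ is pure, so $\rS(\rho_B)-\rS(\rho_C)=\rS(\rho_L)=\rS(\rho_{BC})$, as required.
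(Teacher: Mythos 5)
Your proof is correct, but it takes a genuinely different---and more elementary---route than the paper's. The paper purifies $\rho_{BC}$ to a pure $\rho_{ABC}$, observes that the hypothesis makes $\rho_{ABC}$ saturate \emph{strong} subadditivity, invokes the Hayden--Jozsa--Petz--Winter structure theorem (Proposition~\ref{th:SSAwithequality}) to obtain a direct sum of tensor products $\rho_{ABC} = \bigoplus_k \lambda_k\, \rho_{Ab^L_k}\ot\rho_{b^R_kC}$, and then uses the saturation of ordinary subadditivity on $AC$ (i.e.\ $\rho_{AC}=\rho_A\ot\rho_C$) to force $K=1$. You bypass the structure theorem entirely: after the same purification step you use only the equivalence $\rS(\rho_{AC})=\rS(\rho_A)+\rS(\rho_C)\Longleftrightarrow\rho_{AC}=\rho_A\ot\rho_C$ (faithfulness of mutual information, via Klein's inequality), and then reconstruct the tensor factorization by hand from the uniqueness of purifications: $\ket{\alpha}_{AL}\ot\ket{\psi}_{RC}$ and the global pure state both purify $\rho_A\ot\rho_C$, so a minimal-dimension choice of $\cH_L,\cH_R$ yields an isometry $V\colon\cH_L\ot\cH_R\to\cH_B$ carrying one purification to the other, whose image is exactly the support of $\rho_B$ because the minimal purifying marginals $\rho_L,\rho_R$ are full rank; tracing out $A$ then gives conclusion~(\ref{2}). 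This is essentially the elementary argument of Carlen and Lieb \cite{Carlen} that the paper's closing remark alludes to. As for what each approach buys: the paper's proof is short modulo a deep black box (the SSA-equality theorem of \cite{Hayden}), and it fits the proposition into the strong-subadditivity framework used elsewhere in the paper; yours is self-contained, needing only the Schmidt decomposition and faithfulness of relative entropy, and it makes explicit a point both write-ups otherwise gloss over---the factorization one actually obtains is of the \emph{support} of $\rho_B$ (via $V$), not necessarily of all of $\cH_B$ when $\rho_B$ is rank-deficient. Your direct entropy verification of sufficiency matches what the paper dismisses as immediate.
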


\begin{proof}
The sufficiency of the condition is immediate. The proof of
necessity is presented as follows: Assume that $\rS(\rho_{BC}) =
\rS(\rho_B) - \rS(\rho_C)$. The bipartite state $\rho_{BC}$ can be
purified into a tripartite state $\ket{\Omega_{ABC}} \in \cH_A \ot
\cH_B \ot \cH_C$, where $\cH_A$ is a reference system. Denote
$\rho_{ABC} = \out{\Omega_{ABC}}{\Omega_{ABC}}$. We have
\begin{eqnarray*}
\Ptr{AB}{\rho_{ABC}} = \rho_C,\quad \Ptr{AC}{\rho_{ABC}} = \rho_B,\\
\Ptr{C}{\rho_{ABC}} = \rho_{AB},\quad \Ptr{A}{\rho_{ABC}} =
\rho_{BC}.
\end{eqnarray*}
Now since $\rS(\rho_{ABC}) = 0$, it follows that $\rS(\rho_C) =
\rS(\rho_{AB})$. Thus we have
$$
\rS(\rho_{AB}) + \rS(\rho_{BC}) = \rS(\rho_B) = \rS(\rho_B) +
\rS(\rho_{ABC}),
$$
which, by Proposition~\ref{th:SSAwithequality}, implies that
\begin{enumerate}[(i)]
\item $\cH_B$ can be factorized into the form $\cH_B = \bigoplus_{k=1}^K \cH_{b^L_k} \ot \cH_{b^R_k}$,
\item $\rho_{ABC} = \bigoplus_{k=1}^K \lambda_k \rho_{Ab^L_k} \ot \rho_{b^R_kC}$ for
$\rho_{Ab^L_k} \in \density{\cH_A \ot \cH_{b^L_k}}$ and
$\rho_{b^R_kC} \in \density{\cH_{b^R_k} \ot \cH_C}$, where
$\set{\lambda_k}$ is a probability distribution.
\end{enumerate}
Clearly,
$$
\rS(\rho_{BC}) = \rS(\rho_B) - \rS(\rho_C)\Longrightarrow\rS(\rho_A)
+ \rS(\rho_C) = \rS(\rho_{AC}).
$$
But
$$
\rS(\rho_A) + \rS(\rho_C) =
\rS(\rho_{AC})\Longleftrightarrow\rho_{AC} = \rho_A \ot \rho_C.
$$
From the expression
$$
\rho_{ABC} = \bigoplus_{k=1}^K \lambda_k \rho_{Ab^L_k} \ot
\rho_{b^R_kC},
$$
it follows that
$$
\rho_{AC} = \sum_{k=1}^K \lambda_k \rho_{A,k} \ot \rho_{C,k}.
$$
Combining all the facts above mentioned, we have
$$
K=1,
$$
i.e., the statement \eqref{1} in the present theorem holds. Hence
$$
\rho_{ABC} = \rho_{AL} \ot \rho_{RC}
$$
for $\rho_{AL} \in \density{\cH_A \ot \cH_L}$ and $\rho_{RC} \in
\density{\cH_R \ot \cH_C}$, which implies that both $\rho_{AL}$ and
$\rho_{RC}$ are pure states since $\rho_{ABC}$ is pure state.
Therefore
\begin{eqnarray}
\rho_{BC} = \Ptr{A}{\rho_{AL}} \ot \rho_{RC} = \rho_L \ot
\out{\psi}{\psi}_{RC}
\end{eqnarray}
for $\ket{\psi}_{RC} \in \cH_R \ot \cH_C$, i.e., the statement
\eqref{2} holds. This completes the proof.
\end{proof}

\begin{remark}
The result in Proposition~\ref{th:Araki-Lieb} is employed to study the saturation of the upper bound of quantum discord in \cite{Xi}. Later on, E.A Carlen gives an elementary proof about this result in \cite{Carlen}.
\end{remark}

\end{document}